\documentclass[letterpaper, 10 pt, conference]{ieeeconf}  %

\IEEEoverridecommandlockouts                              %
\usepackage{cite}          %
\usepackage{url}           %
\usepackage[hidelinks]{hyperref} %


\usepackage{amsthm}
\newtheorem{theorem}{Theorem}
\newtheorem{lemma}{Lemma}
\newtheorem{proposition}{Proposition}

\newtheorem{definition}{Definition}
\newtheorem{remark}{Remark}

\usepackage[dvipsnames]{xcolor}
\definecolor{porange}{HTML}{E77500} %
\usepackage{multicol}
\usepackage{amsfonts}
\usepackage{fancyhdr}
\usepackage{comment}
\usepackage{amsmath}
\usepackage{changepage}
\usepackage{amssymb}
\usepackage{graphicx}
\usepackage{adjustbox}
\usepackage{latexsym}
\usepackage{enumerate}
\usepackage{cleveref}
\usepackage{float}
\usepackage{subcaption}

\usepackage{mathtools}
\usepackage[ruled,vlined,linesnumbered]{algorithm2e}
\usepackage{bbm}
\usepackage{bm}
\usepackage{booktabs}
\usepackage[xindy, acronym, nowarn]{glossaries}   %
\usepackage{multirow}
\usepackage{makecell}
\usepackage{colortbl}
\usepackage{balance}

\usepackage{textcomp}
\usepackage{stfloats}  %

\usepackage{fontawesome5}  %

\usepackage{algorithmic}

\usepackage{eso-pic}

\newcommand{\PreprintNotice}{%
  \AddToShipoutPictureFG*{%
    \AtPageLowerLeft{%
      \raisebox{28pt}[0pt][0pt]{%
        \makebox[\paperwidth][c]{%
          \footnotesize
          This work has been submitted to the IEEE for possible publication. Copyright may be transferred without notice, after which this version may no longer be accessible.%
        }%
      }%
    }%
  }%
}

\definecolor{porange}{HTML}{E77500} %

\DeclareDocumentEnvironment{example}{}{\noindent\textbf{Running example:}\itshape}{}

\usepackage{ifthen}
\newboolean{include-notes}
\newboolean{highlight-new}
\newboolean{include-remove}
\setboolean{include-notes}{true}
\setboolean{highlight-new}{true}
\setboolean{include-remove}{true}

\usepackage[dvipsnames]{xcolor}
\usepackage[normalem]{ulem}
\newcommand{\jaime}[1]{\ifthenelse{\boolean{include-notes}}{\textcolor{orange}{\textbf{Jaime:} #1}}{}}
\newcommand{\haimin}[1]{\ifthenelse{\boolean{include-notes}}{\textcolor{magenta}{\textbf{Haimin:} #1}}{}}
\newcommand{\justin}[1]{\ifthenelse{\boolean{include-notes}}{\textcolor{Cerulean}{\textbf{Justin:} #1}}{}}
\newcommand{\himani}[1]{\ifthenelse{\boolean{include-notes}}{\textcolor{Plum}{\textbf{Nishanth:} #1}}{}}
\newcommand{\david}[1]{\ifthenelse{\boolean{include-notes}}{\textcolor{teal}{\textbf{David:} #1}}{}}
\newcommand{\duy}[1]{\ifthenelse{\boolean{include-notes}}{\textcolor{blue}{\textbf{Duy:} #1}}{}}
\newcommand{\remove}[1]{\ifthenelse{\boolean{include-remove}}{\textcolor{red}{\sout{#1}}}{}}
\newcommand{\new}[1]{\ifthenelse{\boolean{highlight-new}}{\textcolor{blue}{#1}}{#1}}
\newcommand{\todo}[1]{\ifthenelse{\boolean{include-notes}}{\textcolor{blue}{\textbf{TODO:} #1}}{}}
\newcommand{\guy}[1]{\ifthenelse{\boolean{include-notes}}{\textcolor{Cerulean}{\textbf{Guy:} #1}}{}}
\newcommand{\flag}[1]{\ifthenelse{\boolean{include-notes}}{\textcolor{red}{#1}}{#1}}

\newcommand{\princeton}[1]{\ifthenelse{\boolean{include-notes}}{\textcolor{orange}{#1}}{}}

\newcommand{\p}[1]{\smallskip \noindent \textbf{{#1}.}}

\newcommand{\ie}{\textit{i.e.}}

\DeclareMathOperator*{\argmaxB}{argmax}
\DeclareMathOperator*{\argminB}{argmin}

\newcommand{\failureset}{{\mathcal{F}}}

\newcommand{\safeset}{{\Omega}}
\newcommand{\task}{{\text{task}}}

\newcommand{\shield}{\text{\tiny{\faShield*}}}

\newcommand{\fallback}{{\policy^\shield}}

\newcommand{\cbf}{h}

\newcommand{\reals}{\mathbb{R}}

\newcommand{\compl}{\mathsf{c}}

\DeclareMathOperator*{\expectation}{\mathbb{E}}

\DeclareMathOperator{\sgn}{sgn}

\newcommand{\st}{\textnormal{s.t.}}

\newcommand{\state}{{x}}

\newcommand{\ctrl}{{u}}

\newcommand{\dstb}{{d}}

\newcommand{\xset}{{\mathcal{X}}}  %

\newcommand{\maxsafeset}{\Omega^*}
\newcommand{\cset}{{\mathcal{U}}}
\newcommand{\dset}{{\mathcal{D}}}

\newcommand{\dyn}{{f}}

\newcommand{\margin}{{g}}
\newcommand{\valfunc}{{V}}
\newcommand{\qfunc}{{\mathcal{Q}}}
\newcommand{\policy}{{\pi}}
\newcommand{\cpolicy}{\pi^\ctrl}
\newcommand{\dpolicy}{\pi^\dstb}

\newcommand{\buffer}{\mathcal{B}}

\newglossarystyle{mylist}{%
  \setglossarystyle{list}%
}

\glsdisablehyper
\makeglossaries

\newglossaryentry{GDA}
{
  name={GDA},
  description={gradient descent-ascent},
  first={gradient descent-ascent (\glsentrytext{GDA})}
}

\newglossaryentry{RL}
{
  name={RL},
  description={reinforcement learning},
  first={reinforcement learning (\glsentrytext{RL})}
}

\newglossaryentry{HJ}
{
  name={HJ},
  description={Hamilton--Jacobi},
  first={Hamilton--Jacobi (\glsentrytext{HJ})}
}

\newglossaryentry{HJI}
{
  name={HJI},
  description={Hamilton--Jacobi--Isaacs},
  first={Hamilton--Jacobi--Isaacs (\glsentrytext{HJI})}
}

\newglossaryentry{DCBF}
{
  name={DCBF},
  description={discrete-time control barrier function},
  first={discrete-time control barrier function (\glsentrytext{DCBF})}
}

\newglossaryentry{CBF}
{
  name={CBF},
  plural={CBFs},
  description={control barrier function},
  first={control barrier function (\glsentrytext{CBF})},
  firstplural={control barrier functions (\glsentryplural{CBF})}
}

\newglossaryentry{Q-CBF}
{
  name={Q-CBF},
  description={state--action control barrier function},
  first={state--action control barrier function (\glsentrytext{Q-CBF})}
}

\newglossaryentry{ODD}
{
  name={ODD},
  description={Operational Design Domain},
  first={operational design domain (\glsentrytext{ODD})}
}

\newglossaryentry{LRSF}
{
  name={LRSF},
  description={Least-Restrictive Safety Filter},
  first={least-restrictive safety filter (\glsentrytext{LRSF})}
}

\newglossaryentry{HCSF}
{
  name={HCSF},
  description={Human-Centered Safety Filter},
  first={human-centered safety filter (\glsentrytext{HCSF})}
}

\newglossaryentry{NLP}
{
  name={NLP},
  description={nonlinear programming problem},
  first={nonlinear programming problem (\glsentrytext{NLP})},
}

\newglossaryentry{ILQR}
{
  name={ILQR},
  description={iterative linear quadratic regulator},
  first={iterative linear quadratic regulator (\glsentrytext{ILQR})},
}

\newglossaryentry{MPC}
{
  name={MPC},
  description={model predictive control},
  first={model predictive control (\glsentrytext{MPC})},
}

\newglossaryentry{AC}
{
  name={AC},
  description={Assetto Corsa},
  first={Assetto Corsa (\glsentrytext{AC})},
}

\newglossaryentry{SAC}
{
  name={SAC},
  description={Soft Actor--Critic},
  first={soft actor--critic (\glsentrytext{SAC})},
}

\newglossaryentry{ANOVA}
{
    name={ANOVA},
    description={Analysis of Variance},
    first={analysis of variance (\glsentrytext{ANOVA})}
}

\newglossaryentry{SME}
{
    name={SME},
    description={Simple Main Effects},
    first={simple main effects (\glsentrytext{SME})},
}

\newglossaryentry{HSD}
{
    name={HSD},
    description={Honestly Significant Difference},
    first={honestly significant difference (\glsentrytext{HSD})},
}

\newglossaryentry{ECDF}
{
    name={ECDF},
    description={Empirical Cumulative Distribution Function},
    first={empirical cumulative distribution function (\glsentrytext{ECDF})},
}

\newglossaryentry{OCP}
{
    name={OCP},
    description={Optimal Control Problem},
    first={optimal control problem (\glsentrytext{OCP})}
}

\newglossaryentry{AI}
{
    name={AI},
    description={Artificial Intelligence},
    first={artificial intelligence (\glsentrytext{AI})}
}

\newglossaryentry{HRI}
{
    name={HRI},
    description={Human--Robot Interaction},
    first={human--robot interaction (\glsentrytext{HRI})}
}

\newglossaryentry{MDP}
{
    name={MDP},
    description={Markov Decision Process},
    first={Markov decision process (\glsentrytext{MDP})}
}

\newglossaryentry{SC-MDP}
{
    name={SC-MDP},
    description={Safety-Critical Markov Decision Process},
    first={safety-critical Markov decision process (\glsentrytext{SC-MDP})}
}

\newacronym[longplural={constrained Markov decision processes}]{CMDP}{CMDP}{constrained Markov decision process}

\newglossaryentry{DP}
{
    name={DP},
    description={Dynamic Programming},
    first={dynamic programming (\glsentrytext{DP})}
}

\newglossaryentry{CPO}
{
    name={CPO},
    description={Constrained Policy Optimization},
    first={Constrained Policy Optimization (\glsentrytext{CPO})}
}

\newglossaryentry{RCPO}
{
    name={RCPO},
    description={Reward Constrained Policy Optimization},
    first={Reward Constrained Policy Optimization (\glsentrytext{RCPO})}
}

\newacronym{MPSF}{MPSF}{model predictive safety filter}

\newacronym{ISAACS}{ISAACS}{Iterative Soft Adversarial Actor–Critic for Safety}

\newacronym{QP}{QP}{quadratic program}

\newacronym[longplural={Gaussian processes}]{GP}{GP}{Gaussian process}

\newacronym{CVaR}{CVaR}{conditional value-at-risk}

\title{\LARGE \bf
Synthesis and Deployment of Maximal Robust Control Barrier Functions through Adversarial Reinforcement Learning
}

\author{Donggeon David Oh, Duy P. Nguyen, Haimin Hu, and Jaime Fernández Fisac%
\thanks{This work has been partially supported by the DARPA Learning Introspective Control (LINC) program (award number N6523623C8002) and by the NSF CAREER Award (number 2340851).}%
\thanks{D. D. Oh, D. P. Nguyen, and J. F. Fisac are with the Department of Electrical and Computer Engineering, Princeton University, Princeton, NJ 08540, USA 
{\tt\small \{do9948, duyn, jfisac\}@princeton.edu}
}%
\thanks{H. Hu is with the Department of Computer Science, Johns Hopkins University, Baltimore, MD 21218, USA
{\tt\small haimin@cs.jhu.edu}}%
}

\begin{document}

\makeatletter
\let\@oldmaketitle\@maketitle
\renewcommand{\@maketitle}{\@oldmaketitle%
  \centering
  \includegraphics[width=\textwidth]{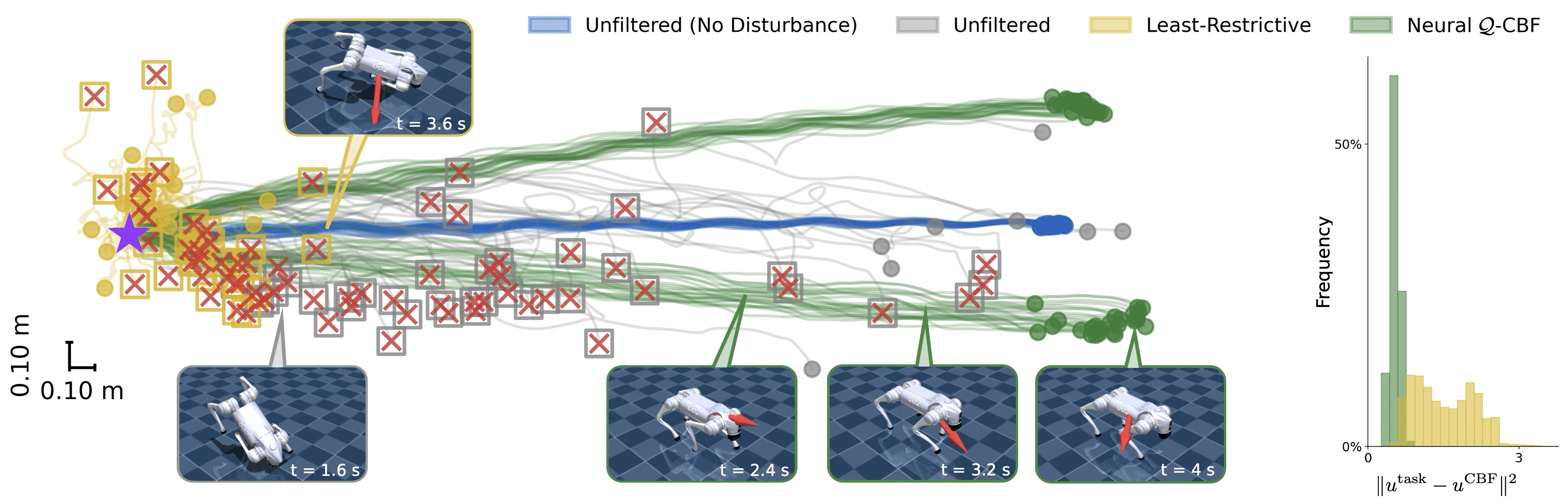}
  \setcounter{figure}{0}
  \vspace{-0.2in}
  \captionof{figure}{
    Our proposed neural $\qfunc$-CBF yields reliable safety enforcement on a simulated 36-D quadruped with \emph{black-box dynamics} under adversarial uncertainty realizations. The robot uses a pure-pursuit task policy to move right from the purple starting point. Over 50 trials, the safe rates are 100\% for $\qfunc$-CBF, 38\% for the least-restrictive safety filter (LRSF), and 16\% for the unfiltered policy. \emph{Left:} Trajectory comparison of the robot under the learned best-response disturbance policy. Red crosses indicate failures. $\qfunc$-CBF (green) enables stable forward locomotion, while the LRSF baseline (yellow) often prevents meaningful forward progress. \emph{Right:} The histogram of task input deviation $\left\|\ctrl^\task-\ctrl^\text{CBF}\right\|^2$ shows that the neural $\qfunc$-CBF enforces substantially smaller task input modifications than LRSF, leading to better preservation of task performance.
  }
  \label{fig:go2_exp_result}
  \vspace{-0.2in}
  \bigskip}
\makeatother

\maketitle
\thispagestyle{empty}
\pagestyle{empty}

\PreprintNotice
\begin{abstract}
Robust control barrier functions (CBFs) provide a principled mechanism for smooth safety enforcement under worst-case disturbances. 
However, existing approaches typically rely on explicit, closed-form structure in the dynamics (e.g.,
control-affine) and uncertainty models.
This has led to limited scalability and generality, with most robust CBFs certifying only conservative subsets of the maximal robust safe set. 
In this paper, we introduce a new robust CBF framework for general nonlinear systems under bounded uncertainty. 
We first show that the safety value function solving the dynamic programming Isaacs equation is a valid robust discrete-time CBF that enforces safety on the maximal robust safe set. 
We then adopt the key reinforcement learning (RL) notion of \textit{quality function} (or Q-function), which
removes the need for explicit dynamics by lifting the barrier certificate into state--action space and yields a novel robust $\qfunc$-CBF constraint for safety filtering.
Combined with adversarial RL, this enables the synthesis and deployment of robust $\qfunc$-CBFs on general nonlinear systems with black-box dynamics and unknown uncertainty structure.
We validate the framework on a canonical inverted pendulum benchmark and a 36-D quadruped simulator, achieving substantially less conservative safe sets than barrier-based baselines on the pendulum and reliable safety enforcement even under adversarial uncertainty realizations on the quadruped.
\end{abstract}

\section{Introduction}
\label{sec:Intro}
Safety-critical systems are increasingly deployed in real-world environments, where uncertainty is unavoidable and even a single safety violation may have catastrophic consequences. This calls for a \emph{robust safety filter}, a runtime process that monitors the system’s operation and intervenes, when necessary, by modifying the control input to preserve safety against all admissible uncertainty realizations~\cite{hsu2024safety}.

Among robust safety filters, robust \glspl{CBF} are often sought after because they enable smooth intervention by solving a safety-enforcing \gls{OCP} at each timestep. Despite their popularity, however, current robust \gls{CBF} approaches face important challenges. Most robust \gls{CBF} schemes require \emph{explicit knowledge} of the control-affine dynamical model for offline synthesis, runtime constraint evaluation, or, in many cases, both~\cite{jankovic2018robust, lopez2020robust, alan2023parameterized}. They also rely on domain-specific structural assumptions of the uncertainty to bound or tractably compute its effect on the dynamics or on the barrier derivative~\cite{alan2022disturbance, cohen2022robust, buch2021robust, emam2021data}. Data-driven extensions alleviate some modeling burden, but still assume substantial model structure, such as closed-form nominal models and known error bounds~\cite{taylor2021towards,lindemann2024learning}. As a result, synthesizing and deploying robust \glspl{CBF} for systems with complex, possibly black-box dynamics and uncertainty structure remains an outstanding challenge. Moreover, depending on how uncertainty is handled, these methods often become \emph{overly conservative} and certify only subsets of the maximal robust safe set.

\gls{HJI} reachability analysis offers a complementary route to robust safety by formulating safety under bounded disturbance, which represents predictive uncertainty such as unknown model error and external perturbations, as a zero-sum game whose value function encodes the maximal robust safe set~\cite{mitchell2005time, hsu2023isaacs, wang2024magics}. Although classical \gls{HJI} methods scale poorly with state dimension, recent \gls{RL}-based methods have shown that safety value functions and best-effort safety policies can be synthesized and deployed on high-dimensional uncertain systems with black-box dynamics~\cite{hsu2023isaacs,wang2024magics, nguyen2025gameplay}. Recent work has also made the connection between reachability analysis and \glspl{CBF} more precise by showing that safety value functions can themselves serve as barrier-like safety certificates~\cite{choi2021robust, hirsch2025viscosity}. These developments suggest that reachability analysis can serve as a principled and scalable route to synthesizing robust \glspl{CBF} that certify and enforce safety over the maximal robust safe set, even for systems with black-box dynamics.

In this paper, we draw a key insight from \gls{RL}---the quality function~\cite{watkins1992q}---that lifts the safety value function into a state--control--disturbance representation, enabling \gls{CBF}-based safety filtering without explicit dynamics or uncertainty models. We consider systems available only as \emph{black-box} transition mechanisms: given the current state, a control input, and a bounded disturbance input, the simulator or physical system returns the next state. Within this setting, we show that the safety value function itself is a valid robust \gls{DCBF}, and that its state--control--disturbance lift yields a novel robust $\qfunc$-\gls{CBF} constraint for safety filtering on \emph{black-box systems with unknown dynamics and uncertainty structure}.

The main contributions of this paper are as follows:
\begin{itemize}
    \item We introduce the robust $\qfunc$-\gls{CBF} framework for \emph{black-box} nonlinear systems under bounded uncertainty. 
    We formally show that the safety value function---solution to the dynamic programming Isaacs equation---is a valid robust \gls{DCBF} on the \emph{maximal} robust safe set, and that its state--action lift yields a novel robust \gls{CBF} constraint for smooth safety filtering under uncertainty.
    
    \item Leveraging reachability-based adversarial \gls{RL}, we develop a scalable robust $\qfunc$-\gls{CBF} synthesis and deployment pipeline for high-dimensional systems without requiring explicit dynamics, control-affine assumptions, prescribed uncertainty structure, or manual barrier function design.

    \item We validate the proposed framework on a disturbed inverted pendulum and a high-fidelity quadrupedal locomotion simulation governed by 36-dimensional black-box dynamics.
    On the pendulum, the learned robust $\qfunc$-\gls{CBF} nearly recovers the maximal robust safe set and is substantially less conservative than barrier-based baselines. On the quadruped, it reliably enforces safety under adversarial uncertainty realizations while achieving efficient forward locomotion.
\end{itemize}

\section{Preliminaries and Related Work}

\subsection{Robust Safety under Bounded Uncertainty}
\label{subsec:Robust_Safety_under_Bounded_Uncertainty}
Throughout this work, we consider a discrete-time system with nonlinear dynamics
\begin{equation}
    \label{eq:dynamics}
    \state_{t+1}=\dyn(\state_t, \ctrl_t, \dstb_t),
\end{equation}
where at each time $t$, $\state_t\in\xset$ is the system's state, $\ctrl_t\in\cset$ is the control input, and $\dstb_t\in\dset$ is an unknown but bounded disturbance input representing predictive uncertainty. 
We further consider a \textit{failure set} $\failureset\subset\xset$ of states that are deemed unacceptable and must be categorically avoided at all times.
The geometry of $\failureset$ may be arbitrary; by convention, we assume only that it is an open set and can therefore be characterized through some continuous margin function $\margin:\xset\rightarrow\mathbb{R}$ (e.g., a signed distance function) as
\begin{equation}
    \failureset\coloneqq\{\state\mid \margin(\state)<0\}.
\end{equation}
These assumptions can be viewed as specifying an \emph{\gls{ODD}}, namely the set of operating conditions under which the system is expected to function correctly and safely~\cite{hsu2024safety, on2021taxonomy}. In particular, the bounded disturbance set $\dset$, together with the modeling assumptions in \eqref{eq:dynamics}, delineates the operating conditions under which the guarantees are intended to hold.

Robust safety analysis aims to determine the \emph{maximal robust safe set} $\maxsafeset\subset\xset$, consisting of all states from which there exists a control policy guaranteed to keep the system outside of~$\failureset$ for all time under \textit{all possible realizations} of $\dstb\in\dset$:
\begin{align}
    \maxsafeset=\bigl\{ \state\in\xset\,\mid\,& \exists\cpolicy:\xset\rightarrow \cset,\,\forall\dpolicy:\xset\rightarrow\dset,\,\forall t\ge 0,\nonumber\\
    &\mathrm{\state}^{\cpolicy,\dpolicy}_\state(t)\notin\failureset\bigr\},
\end{align}
where $\mathrm{\state}^{\cpolicy,\dpolicy}_\state:\mathbb{N}\rightarrow\xset$ represents the state trajectory of the system~\eqref{eq:dynamics} starting from $\state_0=\state$ under the control policy $\cpolicy$ and the disturbance policy $\dpolicy$.

A widely used approach to enforcing safety under uncertainty is robust \gls{CBF}, which certifies forward invariance of a (usually conservative) subset of the maximal robust safe set.

\subsection{Robust Continuous-Time Control Barrier Function}
\label{subsec:RCTCBF}

The existing robust-\gls{CBF} literature is predominantly formulated in continuous time, so we begin by recalling the robust continuous-time \gls{CBF} before returning to the discrete-time setting of primary interest in \autoref{subsec:RDTCBF}.

\begin{definition}[Robust Continuous-Time CBF]
\label{def:RCTCBF}
Consider the continuous-time uncertain system
\begin{equation}
\label{eq:ct_dynamics}
\dot{\state}=\dyn_c(\state,\ctrl,\dstb),
\end{equation}
where $\state\in\xset$, $\ctrl\in\cset$, and $\dstb\in\dset$. A function $\cbf:\xset\rightarrow\reals$ is a robust \gls{CBF} for \eqref{eq:ct_dynamics} if $\safeset=\{\state\in\xset\mid \cbf(\state)\ge 0\}\subset\failureset^\compl$ and there exists a class-$\mathcal{K}$ function $\alpha$, defined on a domain containing $\cbf(\safeset)$, such that
\begin{equation}
\label{eq:RCTCBF_def}
\sup_{\ctrl\in\cset}\inf_{\dstb\in\dset}\ \dot{\cbf}(\state,\ctrl,\dstb)\ge -\alpha(\cbf(\state)),
\qquad\forall \state\in\safeset.
\end{equation}
\end{definition}

\begin{remark}[Choice of Function $\alpha$ in \autoref{def:RCTCBF}]
Since \eqref{eq:RCTCBF_def} is imposed only on $\safeset$, $\alpha$ may be taken from the standard class-$\mathcal{K}$ family. Extended class-$\mathcal{K}$ or extended class-$\mathcal{K}_\infty$ functions are needed when the barrier condition is imposed globally on $\xset$, so that negative values of $\cbf$ are also allowed, while class-$\mathcal{K}_\infty$ or extended class-$\mathcal{K}_\infty$ choices are natural when the range of $\cbf$ is unbounded. Such global formulations may additionally encode set-attractiveness properties, but this loses significance when $\cbf$ encodes $\maxsafeset$, since outside $\maxsafeset$ there exists no control policy that can guarantee safety against all admissible disturbances.
\end{remark}

Now, we explain why existing robust \gls{CBF} formulations can be viewed as special cases of \autoref{def:RCTCBF}.

Existing robust \gls{CBF} formulations typically assume an explicit control-affine model with structured uncertainty to either validate a robust \gls{CBF} or evaluate the robust \gls{CBF} constraint online, which we summarize here:
\begin{equation}
\label{eq:lit_uncertain_affine}
\dot{\state}
=
f(\state)+g(\state)\ctrl+\mu(\state,\ctrl).
\end{equation}
In many of these works, the required closed-form structure appears directly through the computation of $\dot{\cbf}$, for example via Lie-derivative terms such as $L_f h(\state)$ and $L_g h(\state)$, together with a specific description of how uncertainty affects the dynamics or the barrier derivative. The model~\eqref{eq:lit_uncertain_affine} captures many of the uncertainty descriptions used across the robust \gls{CBF} literature, including additive or multiplicative disturbances, structured parametric uncertainty, and sector-bounded input uncertainty~\cite{jankovic2018robust, lopez2020robust, cohen2022robust, buch2021robust, emam2021data}. These structured system classes can therefore be viewed as special cases of the more general uncertain dynamics~\eqref{eq:ct_dynamics}. In our framework, by contrast, we assume neither control-affine structure, explicit knowledge of the dynamics, nor prescribed uncertainty structure for \gls{CBF} synthesis or runtime evaluation.

One major line of work handles uncertainty by incorporating a compensation term $\sigma$ into the \gls{CBF} constraint, for example in the form
\begin{equation}
\label{eq:lit_sigma_form}
L_f h(\state)+L_g h(\state)\ctrl-\sigma(\state,\ctrl)\ge -\alpha(h(\state)).
\end{equation}
Constructing $\sigma$ requires sufficient prior knowledge of the uncertainty structure, for example worst-case disturbance bounds, parameter estimation error bounds, or observer error bounds, to upper-bound the contribution of uncertainty to $\dot h$~\cite{jankovic2018robust, lopez2020robust, alan2023parameterized, alan2022disturbance}. Such methods fit naturally within \autoref{def:RCTCBF}, but they can be conservative because they upper-bound the worst-case effect of uncertainty rather than handling it explicitly. As a result, the robust safe set $\safeset$ is often a strict subset of the maximal robust safe set $\maxsafeset$.

A second line of work keeps the worst-case effect of uncertainty explicit by retaining the inner minimization over $\dstb\in\dset$ in \eqref{eq:RCTCBF_def}, rather than introducing a compensation term. Representative examples include duality-based reformulations of the inner minimization, sector-bounded input uncertainty, and convex hull disturbance models~\cite{cohen2022robust, buch2021robust, emam2021data}. Because they preserve the state- and input-dependent worst-case effect of uncertainty, these approaches may be less conservative than compensation-based methods, but they still require a tractable uncertainty representation and sufficient explicit knowledge of the system to validate a robust \gls{CBF} and evaluate the inner minimization. A particularly relevant reachability-based variant is the robust control barrier-value function framework~\cite{choi2021robust}, which recovers the maximal robust safe set by constructing the certificate as the viscosity solution of a \gls{HJI} variational inequality. Our robust $\qfunc$-\gls{CBF} framework improves on this work by removing the need for runtime deployment to rely on control- and disturbance-affine dynamics with known closed-form expressions.

Data-driven approaches have also been adopted for robust barrier or certificate synthesis. Lindemann et al.~\cite{lindemann2024learning} learn robust output \glspl{CBF} from safe expert demonstrations, and Taylor et al.~\cite{taylor2021towards} develop a broader data-driven robust control synthesis framework based on control certificate functions under actuation uncertainty. Despite leveraging data, these methods still assume substantial model structure, such as known nominal control-affine system models and model error bounds, which are then used to validate the certificates.

Taken together, the robust continuous-time \gls{CBF} formulations reviewed above can all be viewed as instances of \autoref{def:RCTCBF}, but only under stronger structural assumptions. To synthesize a robust \gls{CBF} or evaluate the corresponding constraint online, these methods typically rely on explicit knowledge of the control-affine system dynamics, together with prescribed structure for how uncertainty affects the dynamics or the barrier derivative. Moreover, they are often more conservative than necessary, certifying only subsets of the maximal robust safe set. By contrast, as shown in \autoref{sec:rqcbf} and \autoref{sec:Deployment_Robust_CBF_SF}, our robust $\qfunc$-\gls{CBF} framework treats the uncertain system as a \emph{black-box}. For both synthesis and runtime evaluation, it requires neither control-affine assumptions, explicit knowledge of the dynamics, nor prescribed uncertainty structure, while still recovering the maximal robust safe set.

\subsection{Robust Discrete-Time Control Barrier Function}
\label{subsec:RDTCBF}
We now make explicit how the robust continuous-time \gls{CBF} in \autoref{def:RCTCBF} relates to the discrete-time system of interest \eqref{eq:dynamics}. If a class-$\mathcal{K}$ function $\alpha$ is locally Lipschitz, then it can be associated with a class-$\mathcal{KL}$ function $\beta_\alpha$ by defining, for each $r\ge 0$, the map $\beta_\alpha(r,\cdot)$ as the solution of the initial value problem 
$\dot{y}=-\alpha(y),\; y(0)=r$
~\cite{hirsch2025viscosity, khalil2002nonlinear}.
Recall that a class-$\mathcal{KL}$ function is class-$\mathcal{K}$ in its first argument and, for each fixed $r\ge 0$, is nonincreasing in its second argument and converges to $0$ as $t\to\infty$. Under sampling with period $\Delta t$, the continuous-time robust \gls{CBF} condition \eqref{eq:RCTCBF_def} induces the one-step discrete-time condition
$\cbf(\state_{t+1})\ge \beta_\alpha(\cbf(\state_t),\Delta t)$.
For a fixed sampling period $\Delta t$, the map $r\mapsto \beta_\alpha(r,\Delta t)$ is itself class-$\mathcal{K}$, and with a slight abuse of notation we denote this induced one-step map simply by $\beta(r)$.

Motivated by this continuous--discrete connection, we now introduce the robust \gls{DCBF} for the discrete-time uncertain dynamics \eqref{eq:dynamics}.

\begin{definition}[Robust Discrete-Time CBF]
\label{def:RDCBF}
A function $\cbf:\xset\rightarrow\reals$ is a robust \gls{DCBF} for system \eqref{eq:dynamics} if $\safeset=\{\state\in\xset\mid \cbf(\state)\ge 0\}\subset\failureset^\compl$ and there exists a class-$\mathcal{K}$ function $\beta$, defined on a domain containing $\cbf(\safeset)$, such that
\begin{equation}
\label{eq:RDCBF_def}
\sup_{\ctrl\in\cset}\inf_{\dstb\in\dset}\ \cbf\bigl(\dyn(\state,\ctrl,\dstb)\bigr)\ge \beta(\cbf(\state)),
\qquad\forall \state\in\safeset.
\end{equation}
\end{definition}

For any current state $\state\in\safeset$, safety can be enforced for system~\eqref{eq:dynamics} by solving, at each timestep, an \gls{OCP} that finds the safety-preserving action closest to the task input $\ctrl^\task$:
\begin{subequations}
\label{eq:robust_nlp}
\begin{align}
\ctrl(\state)=\argminB_{\ctrl\in\cset}&\quad\left\|\ctrl^\task-\ctrl\right\|^2, & \label{eq.rnlp_cost}\\
\st&\quad \inf_{\dstb\in\dset}\cbf\bigl(\dyn(\state,\ctrl,\dstb)\bigr)\ge \beta(\cbf(\state)),
\label{eq:RDCBF_constraint}
\end{align}
\end{subequations}
where \eqref{eq:RDCBF_constraint} is the \emph{robust \gls{DCBF} constraint}. 
As with its continuous-time counterpart, $\safeset$ need not coincide with the maximal robust safe set $\maxsafeset$; in general, it may be a strict subset. This conservativeness gap motivates the reachability-based synthesis of safety certificates reviewed next.

\subsection{Hamilton--Jacobi--Isaacs Reachability Analysis}
\label{subsec:HJIRA}
To close the conservativeness gap discussed above, we turn to Hamilton--Jacobi--Isaacs (HJI) reachability analysis. In the discrete-time system~\eqref{eq:dynamics} considered here, the maximal robust safe set $\maxsafeset$ can be computed by solving the dynamic-programming \emph{Isaacs equation}, which formulates the robust safety problem as a zero-sum \textit{safety game} between the controller and the disturbance~\cite{mitchell2005time, hsu2023isaacs}:
\begin{equation}
    \label{eq:HJI}
    \valfunc(\state_t)=\min\Bigl\{\margin(\state_t), \max_{\ctrl_t\in\cset}\min_{\dstb_t\in\dset}\valfunc\bigl(\dyn(\state_t, \ctrl_t, \dstb_t)\bigr)\Bigr\},
\end{equation}
where the \emph{safety value function} $\valfunc:\xset\rightarrow\mathbb{R}$ encodes the minimal margin that the controller can maintain at all times under the worst-case disturbance. Notably, the inner $\max_{\ctrl}\min_{\dstb}$ ordering grants the disturbance an instantaneous \emph{informational advantage} by allowing it to react to the controller’s chosen input at each timestep. Consequently, the maximal robust safe set is given by
\begin{equation}
    \label{eq:maximal_robust_safe_set}
    \maxsafeset=\bigl\{\state\in\xset\mid \valfunc(\state)\geq 0\bigr\}.
\end{equation}

We next adopt a key insight from \gls{RL}: the quality function (action-value function)~\cite{watkins1992q}. Specifically, we define the \emph{state--control--disturbance safety value function}
$\qfunc:\xset\times\cset\times\dset\rightarrow\mathbb{R}$ to satisfy
\begin{equation}
    \label{eq:Q-HJI}
    \qfunc(\state_t,\ctrl_t,\dstb_t)=\min\Bigl\{\margin(\state_t),\valfunc\bigl(\dyn(\state_t,\ctrl_t,\dstb_t)\bigr)\Bigr\}.
\end{equation}
This formulation remains equivalent to \eqref{eq:HJI} in the sense that
\begin{equation}
    \valfunc(\state)=\max_{\ctrl\in\cset}\min_{\dstb\in\dset}\qfunc(\state,\ctrl,\dstb).
\end{equation}
The introduction of $\qfunc$ lifts the safety certificate $\valfunc$ into a state--control--disturbance representation that directly underlies the robust $\qfunc$-\gls{CBF} constraint developed later.

We will also make use of the corresponding \emph{safe fallback policy}
\begin{equation}
    \label{eq:fallback}
    \fallback(\state)\coloneqq\argmaxB_{\ctrl\in\cset}\min_{\dstb\in\dset}\qfunc(\state,\ctrl,\dstb),
\end{equation}
which selects a control input maximizing the worst-case safety value. In conjunction with any task input $\ctrl^\task\in\cset$, the safety value function $\valfunc$ and fallback policy $\fallback$ induce a \gls{LRSF} with value-based monitoring and switch-type intervention~\cite{chen2018hamilton, bansal2017hamilton, hsu2024safety, oh2025provably}:
\begin{equation}
    \label{eq:LRSF}
    \ctrl(\state)=
    \begin{cases}
        \ctrl^\task, & \min_{\dstb\in\dset}\valfunc\bigl(\dyn(\state,\ctrl^\task,\dstb)\bigr)\ge 0,\\
        \fallback(\state), & \text{otherwise}.
    \end{cases}
\end{equation}

\section{Maximal Robust $\qfunc$-CBF}
\label{sec:rqcbf}
In this section, we establish the main theoretical result underlying our robust $\qfunc$-\gls{CBF} framework. Our key insight is that the safety value function $\valfunc$ itself is a valid robust \gls{DCBF} whose $0$-superlevel set is the \emph{maximal robust safe set} $\maxsafeset$. It follows that, for every class-$\mathcal{K}$ function $\beta$ satisfying $\beta(r)\le r$ for all $r\ge 0$, robust safety on $\maxsafeset$ can be enforced through the state--control--disturbance safety value function $\qfunc$ via the constraint
\[
\min_{\dstb\in\dset}\qfunc(\state,\ctrl,\dstb)\ge \beta(\valfunc(\state)).
\]
Crucially, once $\valfunc$ and $\qfunc$ are available, the $\qfunc$-\gls{CBF} constraint can be evaluated from these value functions alone, without explicit closed-form dynamics, control-affine assumptions, or prescribed uncertainty structure. In this sense, the constraint admits \emph{black-box} evaluation. Practical considerations for scalably synthesizing $\valfunc$ and $\qfunc$, together with handling the minimization over $\dstb$ at runtime, are introduced in \autoref{sec:Deployment_Robust_CBF_SF}. We now formalize this result.

\begin{lemma}
\label{lem:robust_exist_Q_bound}
For all $\state\in\maxsafeset$ and for every class-$\mathcal{K}$ function $\beta$ satisfying $\beta(r)\le r$ for all $r\ge 0$, there exists an action $\ctrl\in\cset$ such that
\begin{equation}
\label{eq:robust_exist_Q_bound}
\min_{\dstb\in\dset} \qfunc(\state,\ctrl,\dstb)\;\ge\;\beta(\valfunc(\state)),
\end{equation}
where $\qfunc$ and $\valfunc$ satisfy the state--control--disturbance Isaacs equation~\eqref{eq:Q-HJI} and the Isaacs equation~\eqref{eq:HJI}, respectively.
\end{lemma}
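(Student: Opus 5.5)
The plan is to take $\ctrl$ to be a maximizer of the worst-case $\qfunc$, for instance the fallback action $\fallback(\state)$ from \eqref{eq:fallback}. I will then show that its worst-case value equals $\valfunc(\state)$ exactly, and compare this with $\beta(\valfunc(\state))$ using $\beta(r)\le r$.

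\textbf{Step 1.} I first check that the maximum in $\valfunc(\state)=\max_{\ctrl\in\cset}\min_{\dstb\in\dset}\qfunc(\state,\ctrl,\dstb)$ is attained. This follows from the paper's use of $\max$/$\min$ in \eqref{eq:HJI} and \eqref{eq:fallback}. If one only wants a supremum, the argument below needs an $\epsilon$-adjustment, discussed at the end.

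\textbf{Step 2.} I then verify the identity $\valfunc(\state)=\max_{\ctrl}\min_{\dstb}\qfunc(\state,\ctrl,\dstb)$ directly from \eqref{eq:Q-HJI} and \eqref{eq:HJI}. Since $\margin(\state)$ does not depend on $(\ctrl,\dstb)$, and $\min\{a,\cdot\}$ is monotone, we have
\[
\max_{\ctrl}\min_{\dstb}\min\bigl\{\margin(\state),\valfunc(\dyn(\state,\ctrl,\dstb))\bigr\}=\min\Bigl\{\margin(\state),\max_{\ctrl}\min_{\dstb}\valfunc\bigl(\dyn(\state,\ctrl,\dstb)\bigr)\Bigr\},
\]
and the right-hand side is $\valfunc(\state)$ by \eqref{eq:HJI}. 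With $\ctrl=\fallback(\state)$ this gives $\min_{\dstb}\qfunc(\state,\ctrl,\dstb)=\valfunc(\state)$.

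\textbf{Step 3.} Since $\state\in\maxsafeset$, \eqref{eq:maximal_robust_safe_set} gives $r:=\valfunc(\state)\ge 0$. This lies in the domain where $\beta(r)\le r$ holds, so $\beta(\valfunc(\state))\le\valfunc(\state)=\min_{\dstb}\qfunc(\state,\ctrl,\dstb)$, which is \eqref{eq:robust_exist_Q_bound}.

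The only delicate point is attainment of the outer maximum. If $\valfunc(\state)=0$, the bound requires worst-case value at least $\beta(0)=0$ exactly, so a sup that is not attained would break the argument. I would settle this by appealing to the paper's standing $\max$/$\min$ convention, which presupposes compactness or continuity. If that convention were unavailable, I would instead assume $\cset$ is compact and that $\ctrl\mapsto\min_{\dstb}\qfunc(\state,\ctrl,\dstb)$ is upper semicontinuous, which guarantees a maximizer exists.
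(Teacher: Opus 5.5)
Your proposal is correct and is essentially the paper's proof: choose $\ctrl=\fallback(\state)$, use $\min_{\dstb\in\dset}\qfunc(\state,\fallback(\state),\dstb)=\valfunc(\state)$, and conclude from $\valfunc(\state)\ge 0$ together with $\beta(r)\le r$. Your explicit derivation of $\valfunc=\max_{\ctrl}\min_{\dstb}\qfunc$ from \eqref{eq:Q-HJI} and \eqref{eq:HJI}, and your remark on attainment of the outer maximum, spell out what the paper takes for granted; in fact an $\epsilon$-argument would fail whenever $\beta(\valfunc(\state))=\valfunc(\state)$ (e.g.\ for $\beta$ the identity), not only when $\valfunc(\state)=0$.
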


\begin{proof}
By the definition of the safe fallback policy~\eqref{eq:fallback},
\[
\min_{\dstb\in\dset}\qfunc(\state,\fallback(\state),\dstb)=\valfunc(\state).
\]
Since $\state\in\maxsafeset$, we have $\valfunc(\state)\ge 0$. Hence, for any class-$\mathcal{K}$ function $\beta$ satisfying $\beta(r)\le r$ for all $r\ge 0$,
\[
\min_{\dstb\in\dset}\qfunc(\state,\fallback(\state),\dstb)
=\valfunc(\state)\ge \beta(\valfunc(\state)).
\]
Therefore, $\ctrl=\fallback(\state)$ satisfies \eqref{eq:robust_exist_Q_bound}.
\end{proof}

\begin{lemma}
\label{lem:robust_QCBF_equivalence}
For all $\state\in\maxsafeset$, all $\ctrl\in\cset$, and every class-$\mathcal{K}$ function $\beta$ satisfying $\beta(r)\le r$ for all $r\ge 0$, the inequality
\begin{equation}
\label{eq:robust_QCBF_ineq}
\min_{\dstb\in\dset} \qfunc(\state,\ctrl,\dstb)\;\ge\;\beta(\valfunc(\state))
\end{equation}
is equivalent to
\begin{equation}
\label{eq:robust_DCBF_step}
\inf_{\dstb\in\dset}\valfunc\bigl(\dyn(\state,\ctrl,\dstb)\bigr)\;\ge\;\beta(\valfunc(\state)).
\end{equation}
\end{lemma}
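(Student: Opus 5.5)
The plan is to unfold the definition \eqref{eq:Q-HJI} of $\qfunc$ and pull the minimum over $\dstb$ through the outer $\min$, which yields
\[
\min_{\dstb\in\dset}\qfunc(\state,\ctrl,\dstb)=\min\Bigl\{\margin(\state),\ \inf_{\dstb\in\dset}\valfunc\bigl(\dyn(\state,\ctrl,\dstb)\bigr)\Bigr\}.
\]
This identity holds because $\margin(\state)$ does not depend on $\dstb$, and the infimum of $\min\{a,b(\dstb)\}$ equals $\min\{a,\inf_{\dstb} b(\dstb)\}$. That is an elementary fact about infima; attainment of the minimum is as assumed in the paper's notation. With this identity in hand, \eqref{eq:robust_QCBF_ineq} reads $\min\{\margin(\state),\,m\}\ge\beta(\valfunc(\state))$, where $m:=\inf_{\dstb}\valfunc(\dyn(\state,\ctrl,\dstb))$.

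For the direction \eqref{eq:robust_QCBF_ineq} $\Rightarrow$ \eqref{eq:robust_DCBF_step}, the argument is immediate. The minimum of two numbers is bounded above by each of them, so $m\ge\min\{\margin(\state),m\}\ge\beta(\valfunc(\state))$.

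For the converse, assume $m\ge\beta(\valfunc(\state))$. It then suffices to show that $\margin(\state)\ge\beta(\valfunc(\state))$ as well, since the minimum of two numbers that are each at least $\beta(\valfunc(\state))$ is also at least that value. This is where the hypotheses on $\state$ and $\beta$ enter, and it is the only step with any content. I will use three facts in sequence:
\begin{itemize}
\item From the Isaacs equation \eqref{eq:HJI}, $\valfunc(\state)\le\margin(\state)$ for every state.
\item Since $\state\in\maxsafeset$, \eqref{eq:maximal_robust_safe_set} gives $\valfunc(\state)\ge 0$, so $\beta(\valfunc(\state))$ is defined and the assumption $\beta(r)\le r$ for $r\ge0$ applies.
\item Chaining these, $\beta(\valfunc(\state))\le\valfunc(\state)\le\margin(\state)$.
\end{itemize}
Combining this with $m\ge\beta(\valfunc(\state))$ gives $\min\{\margin(\state),m\}\ge\beta(\valfunc(\state))$, which is \eqref{eq:robust_QCBF_ineq}.

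I expect no real obstacle. The only care needed is to check that the min/inf interchange is valid even when the infimum over $\dset$ is not attained; this is harmless because the interchange holds for infima in general. The restriction $\state\in\maxsafeset$ is essential to the converse direction, because it is what allows the bound $\beta(r)\le r$ to be applied at $r=\valfunc(\state)$.
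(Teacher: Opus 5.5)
Your proposal is correct and follows essentially the paper's argument: the forward direction drops the $\margin(\state)$ term from the minimum, and the converse uses the chain $\beta(\valfunc(\state))\le\valfunc(\state)\le\margin(\state)$, valid because $\state\in\maxsafeset$. The only difference is cosmetic: you first rewrite $\min_{\dstb}\qfunc$ as $\min\{\margin(\state),\inf_{\dstb}\valfunc(\dyn(\state,\ctrl,\dstb))\}$, whereas the paper argues pointwise for each $\dstb$ and takes the minimum or infimum at the end.
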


\begin{proof}
First, \eqref{eq:HJI} implies
\begin{equation}
\label{eq:V_leq_g}
\valfunc(\state)\le \margin(\state), \qquad \forall \state\in\xset.
\end{equation}

For the forward direction, if \eqref{eq:robust_QCBF_ineq} holds, then
\[
\qfunc(\state,\ctrl,\dstb)\ge \beta(\valfunc(\state)), \qquad \forall \dstb\in\dset.
\]
Using \eqref{eq:Q-HJI}, this becomes
\[
\min\{\margin(\state),\,\valfunc(\dyn(\state,\ctrl,\dstb))\}\ge \beta(\valfunc(\state)),
\qquad \forall \dstb\in\dset,
\]
which immediately implies
\[
\valfunc(\dyn(\state,\ctrl,\dstb))\ge \beta(\valfunc(\state)), \qquad \forall \dstb\in\dset.
\]
Taking the infimum over $\dstb\in\dset$ yields \eqref{eq:robust_DCBF_step}.

For the reverse direction, assume \eqref{eq:robust_DCBF_step}. Then
\[
\valfunc(\dyn(\state,\ctrl,\dstb))\ge \beta(\valfunc(\state)), \qquad \forall \dstb\in\dset.
\]
Since $\state\in\maxsafeset$, we have $\valfunc(\state)\ge 0$, and therefore \eqref{eq:V_leq_g} together with $\beta(r)\le r$ for all $r\ge 0$ gives
\[
\margin(\state)\ge \valfunc(\state)\ge \beta(\valfunc(\state)).
\]
Hence
\[
\min\{\margin(\state),\,\valfunc(\dyn(\state,\ctrl,\dstb))\}\ge \beta(\valfunc(\state)),
\qquad \forall \dstb\in\dset.
\]
Using \eqref{eq:Q-HJI}, this is exactly
\[
\qfunc(\state,\ctrl,\dstb)\ge \beta(\valfunc(\state)), \qquad \forall \dstb\in\dset,
\]
and taking the minimum over $\dstb\in\dset$ gives \eqref{eq:robust_QCBF_ineq}.
\end{proof}

We now prove that the safety value function $\valfunc$ is a valid robust \gls{DCBF} on the maximal safe set $\maxsafeset$.
Consequently, we call it the \emph{maximal robust $\qfunc$-\gls{CBF}}.
Building on this result, we then present the corresponding robust $\qfunc$-CBF constraint.

\begin{theorem}[Maximal Robust $\qfunc$-CBF]
\label{thm:robust_QCBF}
The safety value function $\valfunc:\xset\rightarrow\reals$, which satisfies the Isaacs equation~\eqref{eq:HJI}, is a valid robust \gls{DCBF} whose $0$-superlevel set is the maximal robust safe set $\maxsafeset$, in the sense of \autoref{def:RDCBF}. In particular, for any class-$\mathcal{K}$ function $\beta$ satisfying $\beta(r)\le r$ for all $r\ge 0$, the robust $\qfunc$-\gls{CBF} constraint, which is equivalent on $\maxsafeset$ to the robust \gls{DCBF} constraint~\eqref{eq:RDCBF_constraint} with $\cbf=\valfunc$, is given by
\begin{equation}
\label{eq:robust_QCBF_constraint}
\min_{\dstb_t\in\dset}\qfunc(\state_t,\ctrl_t,\dstb_t)\ge \beta(\valfunc(\state_t)),
\end{equation}
where $\qfunc$ satisfies the state--control--disturbance Isaacs equation~\eqref{eq:Q-HJI}.
\end{theorem}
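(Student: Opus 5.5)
The plan is to verify the two requirements of \autoref{def:RDCBF} for $\cbf=\valfunc$ and $\safeset=\{\state\mid\valfunc(\state)\ge 0\}$, and then read off the constraint equivalence from \autoref{lem:robust_QCBF_equivalence}. By \eqref{eq:maximal_robust_safe_set}, the $0$-superlevel set of $\valfunc$ is exactly $\maxsafeset$, so the set in \autoref{def:RDCBF} is the maximal robust safe set by construction.

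\emph{Containment in the safe region.} From \eqref{eq:HJI} we have $\valfunc(\state)\le\margin(\state)$ for every $\state$. Hence $\valfunc(\state)\ge 0$ implies $\margin(\state)\ge 0$, that is, $\state\notin\failureset$. This gives $\maxsafeset\subset\failureset^\compl$.

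\emph{Decrease condition.} Fix any class-$\mathcal{K}$ function $\beta$ with $\beta(r)\le r$ for $r\ge 0$; the identity on $[0,\infty)$ is one such choice. Its domain $[0,\infty)$ contains $\valfunc(\maxsafeset)$. For $\state\in\maxsafeset$, \autoref{lem:robust_exist_Q_bound} supplies a control, namely $\fallback(\state)$, with $\min_{\dstb}\qfunc(\state,\ctrl,\dstb)\ge\beta(\valfunc(\state))$. \autoref{lem:robust_QCBF_equivalence} converts this into $\inf_{\dstb\in\dset}\valfunc(\dyn(\state,\ctrl,\dstb))\ge\beta(\valfunc(\state))$. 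Taking the supremum over $\ctrl\in\cset$ then yields \eqref{eq:RDCBF_def}, so $\valfunc$ is a robust DCBF.

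\emph{The constraint.} Still using \autoref{lem:robust_QCBF_equivalence}, for every $\state\in\maxsafeset$ and every $\ctrl$ the inequality \eqref{eq:robust_QCBF_constraint} is equivalent to \eqref{eq:RDCBF_constraint} with $\cbf=\valfunc$. The two constraints therefore define the same feasible set in the OCP \eqref{eq:robust_nlp}, and feasibility is nonempty by \autoref{lem:robust_exist_Q_bound}.

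The main subtlety is attainment: the fallback policy \eqref{eq:fallback} and the $\max/\min$ in \eqref{eq:HJI} must be achieved. I would either adopt the paper's implicit assumption (for example, compact $\cset,\dset$ and continuity) or note that existence of some $\ctrl$ is all that is used, so the supremum in \eqref{eq:RDCBF_def} is bounded below whenever the argmax exists. The rest is direct bookkeeping.
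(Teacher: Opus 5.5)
Your proposal is correct and follows essentially the same route as the paper: \autoref{lem:robust_exist_Q_bound} supplies a feasible control (the fallback $\fallback(\state)$), \autoref{lem:robust_QCBF_equivalence} converts the $\qfunc$-inequality into the robust DCBF step before taking the supremum over $\ctrl$, and the same lemma gives the equivalence of the two constraints on $\maxsafeset$. Your explicit derivation of $\maxsafeset\subset\failureset^\compl$ from $\valfunc\le\margin$, which the paper only asserts, and your flag on the implicit attainment assumption behind the $\max/\min$ in \eqref{eq:HJI} are sensible refinements rather than a different argument.
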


\begin{proof}
Since $\maxsafeset=\{\state\in\xset\mid \valfunc(\state)\ge 0\}\subset\failureset^\compl$, we now show that $\valfunc$ satisfies the robust \gls{DCBF} condition in \autoref{def:RDCBF} on $\maxsafeset$.

Let $\beta$ be any class-$\mathcal{K}$ function satisfying $\beta(r)\le r$ for all $r\ge 0$. By \autoref{lem:robust_exist_Q_bound}, for every $\state\in\maxsafeset$ there exists $\ctrl\in\cset$ such that
\[
\min_{\dstb\in\dset}\qfunc(\state,\ctrl,\dstb)\ge \beta(\valfunc(\state)).
\]
By \autoref{lem:robust_QCBF_equivalence}, this is equivalent to
\[
\inf_{\dstb\in\dset}\valfunc(\dyn(\state,\ctrl,\dstb))\ge \beta(\valfunc(\state)).
\]
Hence, for this choice of $\beta$, the function $\valfunc$ satisfies \autoref{def:RDCBF} with $\cbf=\valfunc$ on $\maxsafeset$. Since $\beta$ was arbitrary, $\valfunc(\cdot)$ is a valid robust \gls{DCBF} on $\maxsafeset$ for every class-$\mathcal{K}$ function satisfying $\beta(r)\le r$ for all $r\ge 0$. The equivalent robust $\qfunc$-\gls{CBF} constraint is precisely \eqref{eq:robust_QCBF_constraint}.
\end{proof}

We now leverage \autoref{thm:robust_QCBF} to introduce a robust $\qfunc$-\gls{CBF} safety filter that guarantees safety on the maximal robust safe set $\maxsafeset$. For any state $\state_t\in\maxsafeset$, the filter solves, at each timestep, an \gls{OCP} to find the control input closest to the task input $\ctrl^\task_t$ while satisfying the robust $\qfunc$-\gls{CBF} constraint~\eqref{eq:robust_QCBF_constraint}. This allows the filter to enforce safety against all admissible disturbance realizations $\dstb_t\in\dset$.

\begin{definition}[Robust $\qfunc$-CBF Safety Filter]
\label{def:Robust_CBF_SF}
For any class-$\mathcal{K}$ function $\beta$ satisfying $\beta(r)\le r$ for all $r\ge 0$, the robust $\qfunc$-\gls{CBF} safety filter is defined by
\begin{subequations}
\label{eq:Robust_CBF_SF}
\begin{align}
\ctrl(\state_t)=\argminB_{\ctrl_t\in\cset}&\quad\left \| \ctrl^\task_t-\ctrl_t \right \|^2, & \label{eq:Robust_CBF_SF_cost}\\
\st&\quad\min_{\dstb_t\in\dset}\qfunc(\state_t, \ctrl_t, \dstb_t)\geq\beta(\valfunc(\state_t)),
\label{eq:Robust_CBF_SF_constraint}
\end{align}
\end{subequations}
for $\state_t\in\maxsafeset$, where $\beta$ specifies how much of the safety value function must be preserved over a single timestep.
\end{definition}

\begin{remark}[Uncertainty-Free Special Case]
In the absence of uncertainty, the dynamics reduce to $\state_{t+1}=\dyn(\state_t,\ctrl_t)$ and the minimization over $\dstb_t$ in \eqref{eq:Robust_CBF_SF_constraint} disappears, yielding
\begin{equation*}
\tag{\ref{eq:Robust_CBF_SF}c}
\label{eq:Robust_CBF_SF_constraint_nod}
\qfunc(\state_t,\ctrl_t)\ge \beta(\valfunc(\state_t)).
\end{equation*}
This recovers the $\qfunc$-\gls{CBF} constraint proposed in~\cite{oh2025safety}.
\end{remark}

We emphasize the key difference between the robust continuous-time \gls{CBF} constraint~\eqref{eq:RCTCBF_def}, the robust \gls{DCBF} constraint~\eqref{eq:RDCBF_constraint}, and the robust $\qfunc$-\gls{CBF} constraint~\eqref{eq:Robust_CBF_SF_constraint}. Both \eqref{eq:RCTCBF_def} and \eqref{eq:RDCBF_constraint} depend explicitly on the system dynamics, and therefore require explicit knowledge of the dynamics and uncertainty in order to evaluate the constraint. By contrast, the robust $\qfunc$-\gls{CBF} constraint~\eqref{eq:Robust_CBF_SF_constraint} is posed directly in terms of the safety value functions $\valfunc$ and $\qfunc$. Thus, once $\valfunc$ and $\qfunc$ are available, evaluating \eqref{eq:Robust_CBF_SF_constraint} no longer requires explicit closed-form dynamics, control-affine assumptions, or prescribed uncertainty structure. Together with the practical method introduced in \autoref{sec:Deployment_Robust_CBF_SF} for handling the inner minimization over $\dstb$ at runtime, this formulation enables safety enforcement for black-box dynamical systems.

The recursive feasibility of the robust $\qfunc$-\gls{CBF} safety filter on the maximal robust safe set $\maxsafeset$ is a direct consequence of the safety value function $\valfunc$ being a valid robust \gls{DCBF}, as stated in \autoref{thm:robust_QCBF}.

\begin{proposition}[Recursive Feasibility of Robust $\qfunc$-\gls{CBF} Safety Filter] \label{pro:recursive_feasibility}
The optimization problem in \eqref{eq:Robust_CBF_SF} is recursively feasible for any class-$\mathcal{K}$ function $\beta$ satisfying $\beta(r)\le r$ for all $r\ge 0$, given any initial state $\state \in \maxsafeset$.
\end{proposition}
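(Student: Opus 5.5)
The argument is an induction on the timestep. The invariant to carry is that the current state lies in $\maxsafeset$. Given that invariant, we show the problem in \eqref{eq:Robust_CBF_SF} has a feasible point, and that any feasible input keeps the next state in $\maxsafeset$ for every admissible disturbance. Fix a class-$\mathcal{K}$ function $\beta$ with $\beta(r)\le r$ for all $r\ge 0$, and assume $\state_t\in\maxsafeset$; the base case is the assumption $\state_0=\state\in\maxsafeset$.

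\emph{Feasibility at time $t$.} This follows directly from \autoref{lem:robust_exist_Q_bound}. Since $\state_t\in\maxsafeset$, the input $\ctrl=\fallback(\state_t)$ satisfies
\[
\min_{\dstb\in\dset}\qfunc(\state_t,\ctrl,\dstb)\ge\beta(\valfunc(\state_t)),
\]
so the feasible set of \eqref{eq:Robust_CBF_SF} is nonempty. If we also want to assert that the $\argminB$ is attained, not merely that the constraint set is nonempty, we need an additional assumption. I would add compactness of $\cset$ and closedness of the constraint set, for instance via continuity of $\ctrl\mapsto\min_{\dstb}\qfunc(\state_t,\ctrl,\dstb)$. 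Alternatively, I would interpret feasibility as nonemptiness of the constraint set, which seems to be the paper's intended meaning, given that \eqref{eq:fallback} already presumes the max is attained.

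\emph{Invariance.} Let $\ctrl_t$ be any feasible input, in particular the filter output. By \autoref{lem:robust_QCBF_equivalence}, applicable because $\state_t\in\maxsafeset$, the constraint is equivalent to
\[
\inf_{\dstb\in\dset}\valfunc\bigl(\dyn(\state_t,\ctrl_t,\dstb)\bigr)\ge\beta(\valfunc(\state_t)).
\]
Since $\valfunc(\state_t)\ge 0$ and $\beta$ is class-$\mathcal{K}$, we have $\beta(\valfunc(\state_t))\ge\beta(0)=0$. Hence for every realized $\dstb_t\in\dset$,
\[
\valfunc(\state_{t+1})=\valfunc\bigl(\dyn(\state_t,\ctrl_t,\dstb_t)\bigr)\ge 0.
\]
By \eqref{eq:maximal_robust_safe_set}, this gives $\state_{t+1}\in\maxsafeset$, which closes the induction. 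As a byproduct, \eqref{eq:V_leq_g} gives $\margin(\state_{t+1})\ge\valfunc(\state_{t+1})\ge0$, so the trajectory never enters $\failureset$.

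The only delicate point is the one flagged above: whether "feasible" must include attainment of the minimizer. I expect that to be the main obstacle only if one insists on it. The core of the argument is just chaining \autoref{lem:robust_exist_Q_bound}, \autoref{lem:robust_QCBF_equivalence}, and $\beta(0)=0$. I would also remark that the domain of $\beta$ contains $\valfunc(\maxsafeset)\subset[0,\infty)$, so $\beta(\valfunc(\state_t))$ is well defined at every step.
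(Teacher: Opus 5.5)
Your proof is correct and is essentially the paper's argument. Feasibility at each step comes from the fallback input; the paper cites \autoref{thm:robust_QCBF}, which itself rests on \autoref{lem:robust_exist_Q_bound}. Invariance comes from $\valfunc(\dyn(\state_t,\ctrl_t,\dstb_t))\ge\qfunc(\state_t,\ctrl_t,\dstb_t)\ge\beta(\valfunc(\state_t))\ge 0$, which the paper reads off \eqref{eq:Q-HJI} directly rather than going through the forward direction of \autoref{lem:robust_QCBF_equivalence}. Your caveat about attainment of the $\argminB$ is a legitimate refinement: the paper leaves it implicit and treats feasibility as nonemptiness of the constraint set.
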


\begin{proof}
Fix any $\state_t\in\maxsafeset$. By \autoref{thm:robust_QCBF}, there exists $\ctrl_t\in\cset$ satisfying \eqref{eq:Robust_CBF_SF_constraint}, so \eqref{eq:Robust_CBF_SF} is feasible at time $t$.

Now let $\ctrl_t$ be any feasible solution of \eqref{eq:Robust_CBF_SF}. Then
\[
\qfunc(\state_t,\ctrl_t,\dstb_t)\ge \beta(\valfunc(\state_t)),
\qquad \forall \dstb_t\in\dset.
\]
By \eqref{eq:Q-HJI},
\[
\valfunc\bigl(\dyn(\state_t,\ctrl_t,\dstb_t)\bigr)\ge \qfunc(\state_t,\ctrl_t,\dstb_t),
\qquad \forall \dstb_t\in\dset,
\]
and therefore
\[
\valfunc\bigl(\dyn(\state_t,\ctrl_t,\dstb_t)\bigr)\ge \beta(\valfunc(\state_t)),
\qquad \forall \dstb_t\in\dset.
\]
Since $\state_t\in\maxsafeset$, we have $\valfunc(\state_t)\ge 0$, and because $\beta$ is class-$\mathcal{K}$, $\beta(\valfunc(\state_t))\ge 0$. Hence
\[
\valfunc\bigl(\dyn(\state_t,\ctrl_t,\dstb_t)\bigr)\ge 0,
\qquad \forall \dstb_t\in\dset,
\]
which implies $\state_{t+1}\in\maxsafeset$. Repeating the same argument at the next timestep proves recursive feasibility.
\end{proof}

\section{Synthesizing and Deploying Robust $\qfunc$-CBF
}
\label{sec:Deployment_Robust_CBF_SF}

The robust $\qfunc$-\gls{CBF} safety filter introduced in Section~\ref{sec:rqcbf} is computationally challenging for high-dimensional systems, both during synthesis and at deployment.
First, computing safety value function $\qfunc$ via direct solve of Isaacs equation \eqref{eq:HJI} is intractable due to the curse of dimensionality. 
In addition, deploying $\qfunc$-\gls{CBF} safety filter at runtime
requires optimizing disturbance input $\dstb$ when evaluating
\(
\min_{\dstb\in\dset} \qfunc(\state,\ctrl,\dstb)
\)
in the robust $\qfunc$-\gls{CBF} constraint \eqref{eq:Robust_CBF_SF_constraint}.
This leads to a \emph{nested} minimization over $\dstb\in\dset$, 
which is computationally intractable for high-dimensional disturbance spaces.
In this section, we introduce scalable robust $\qfunc$-\gls{CBF} synthesis and deployment procedures based on a recently proposed game-theoretic adversarial \gls{RL}~\cite{hsu2023isaacs,wang2024magics}.

\subsection{Synthesizing Robust $\qfunc$-CBF via Game-theoretic RL}

\label{subsec:robust_neural_synthesis}
Our robust $\qfunc$-\gls{CBF} safety filter design assumes access to the state--control--disturbance safety value function $\qfunc$.
Instead of directly solving the Isaacs equation \eqref{eq:HJI}, we propose to synthesize it through an adversarial \gls{RL} process~\cite{hsu2023isaacs,wang2024magics}, where the controller and disturbance play a zero-sum dynamic game.

\p{Game-theoretic reinforcement learning}
The adversarial \gls{RL} jointly trains a critic (safety value) $\qfunc_\omega(\state,\ctrl,\dstb)$, controller actor $\cpolicy_\theta(\state)$, and disturbance actor $\dpolicy_\psi(\state,\ctrl)$.
Here, we allow the disturbance to observe and react to the control input, making it a stronger and more targeted adversary.
The critic minimizes the Bellman residual of the time-discounted Isaacs equation~\eqref{eq:HJI}:
\begin{equation*}
\begin{aligned}
L(\omega,&\theta,\psi)\;\coloneqq\; \expectation_{\xi \sim \buffer} \Bigl[\Bigl( \qfunc_{\omega}\bigl(\state,\ctrl,\dstb\bigr) - \Bigr.\Bigr.\\
& \Bigl.\Bigr. (1-\gamma_{\mathrm{ENV}})\,g' -\gamma_{\mathrm{ENV}}\min\Bigl\{g', \qfunc_{\omega'}\bigl(\state',\ctrl',\dstb'\bigr)\Bigr\}\Bigr)^2 \Bigr],
\end{aligned}
\end{equation*}
where $\xi = (\state,\ctrl,\dstb,g',\state')$ is a transition sampled from the replay buffer $\buffer$,
$\ctrl'\sim \cpolicy_\theta(\cdot\mid\state')$, $\dstb'\sim \dpolicy_\psi(\cdot\mid\state',\ctrl')$, 
$g':=\margin(\state')$,
$\gamma_{\mathrm{ENV}}\in(0,1)$  is the discount factor,
and $\qfunc_{\omega'}$ is a slowly updated target critic.
Following standard adversarial \gls{RL} practice, the controller maximizes critic $\qfunc_{\omega}$ (with an additional exploration term) while the disturbance minimizes the same critic.

\p{GDA with finite timescale separation}
In order to seek a minimax equilibrium of the safety game, we leverage
\gls{GDA} with \textit{finite timescale separation}~\cite{fiez2021local,jin2020local} between the learning rates of the controller and disturbance.
Specifically, the disturbance's policy parameters $\psi$ are updated on a faster timescale than the controller parameters $\theta$.
This strategy stabilizes learning by encouraging the disturbance actor to track a near-best response to the current controller actor.
Formally, it has been shown in Wang and Hu et al.~\cite{wang2024magics} that the two-timescale RL training guarantees convergence to a local minimax equilibrium $(\cpolicy_{\theta^*}, \dpolicy_{\psi^*})$ in the \textit{policy space}. 
At such an equilibrium, the learned disturbance policy represents a locally \textit{best-effort worst-case} response to the controller, yielding a principled approximation of the inner minimization $\min_{d\in D} \qfunc(x,u,d)$ in constraint \eqref{eq:Robust_CBF_SF_constraint} and enabling tractable computation of the robust $\qfunc$-CBF safety filtering at runtime.

\p{Training a best-response disturbance policy}
While the \gls{GDA}-trained disturbance policy $\dpolicy_{\psi^*}: \state \mapsto \dstb$ is a worst-case response to the \textit{equilibrium} control policy $\cpolicy_{\theta^*}: \state \mapsto \ctrl$, it does not necessarily minimize the critic $\qfunc_{\omega^*}$ for \textit{arbitrary} control inputs generated by $\qfunc$-CBF optimization~\eqref{eq:Robust_CBF_SF},
which may be exploitable.
To ensure local robustness of $\qfunc$-CBF, we further 
train
a \emph{best-response disturbance policy} $\dpolicy_{\tilde{\psi}}$ for a range of diverse control policies.
We do so by training the disturbance policy $\dpolicy_{\psi}$
by minimizing $\qfunc_{\omega}(\state,\ctrl,\dstb)$ with respect to $\dstb$, where $(\state,\ctrl)$ pairs are sampled from
a diverse set of control policies that play against the disturbance.
In practice, we collect these control policies from checkpoints of multiple RL training runs with different random seeds, thereby exposing the disturbance policy to a broad distribution of controller behaviors.
This procedure encourages $\dpolicy_{\tilde{\psi}}(\state,\ctrl)$ to approximate a local minimizer of the critic across the entire state--control space.

\subsection{Real-time Safety Filtering with Robust Neural $\qfunc$-CBF}
\label{subsec:QCBF_deployment}

At runtime, the neural-approximated disturbance policy $\dpolicy_{\tilde{\psi}}(\state,\ctrl)$ allows for efficient evaluation of $\min_{\dstb\in\dset}\qfunc(\state,\ctrl,\dstb)$ inside the robust $\qfunc$-\gls{CBF} constraint \eqref{eq:Robust_CBF_SF_constraint}. 
Specifically, we use the plug-in approximation $\tilde{\dstb} = \dpolicy_{\tilde{\psi}}(\state,\ctrl)$ to obtain $\qfunc(\state,\ctrl,\tilde{\dstb})$ as a tractable surrogate of the worst-case safety value.
This substitution removes the need for nested optimization in $\qfunc$-\gls{CBF} safety filtering \eqref{eq:Robust_CBF_SF} by reducing constraint evaluation \eqref{eq:Robust_CBF_SF_constraint} to a single forward pass through the disturbance policy network and safety critic.
Importantly, this plug-in choice is not arbitrary: 
$\dpolicy_{\tilde{\psi}}$ is trained as a best-response policy that minimizes the safety critic against arbitrary control policies within the same safety game defined by~\eqref{eq:HJI}.
Since $\dpolicy_{\tilde{\psi}}$ is locally worst-case, any nearby disturbance \textit{realization} would produce a less adversarial disturbance action at runtime.
As a result, enforcing the robust $\qfunc$-CBF constraint assuming the neural disturbance action $\tilde{\dstb}=\dpolicy_{\tilde{\psi}}(\state,\ctrl)$ automatically guarantees safety against all disturbance realizations that are sufficiently close.
This insight is formalized in the following proposition; the proof follows directly from local optimality of $\dpolicy_{\tilde{\psi}}(\state,\ctrl)$.

\begin{proposition}[Local Robustness of Neural $\qfunc$-CBF]
    \label{prop:local_robustness}

    Suppose $\dpolicy_{\tilde{\psi}}(\state,\ctrl)$ is a locally optimal disturbance policy, \ie,
    \begin{equation*}
    \begin{aligned}
    \qfunc\bigl(\state,\ctrl,\dpolicy_{\tilde{\psi}}(\state,\ctrl)\bigr)
    &\;\le\;
    \qfunc\bigl(\state,\ctrl,\dpolicy_{\psi}(\state,\ctrl)\bigr),
    \end{aligned}
    \end{equation*}
    for all $\state \in \xset$, $\ctrl \in \cset$, and $\psi \in B_\rho(\tilde{\psi})$ with $\rho>0$.
    If the surrogate constraint
    \[
    \qfunc(\state,\ctrl,\dpolicy_{\tilde{\psi}}(\state,\ctrl))
    \ge
    \beta(\valfunc(\state))
    \]
    holds for a given state $\state$ and control action $\ctrl$, then
    \[
    \qfunc(\state,\ctrl,\dpolicy_{\psi}(\state,\ctrl))
    \ge
    \beta(\valfunc(\state))
    \]
    also holds for all disturbance policies with $\psi \in B_\rho(\tilde{\psi})$.

\end{proposition}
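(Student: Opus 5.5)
The argument is a one-line chaining of inequalities: the local-optimality hypothesis carries the surrogate lower bound over to every nearby disturbance policy. I would fix an arbitrary state $\state\in\xset$ and control $\ctrl\in\cset$ for which the surrogate constraint holds, and an arbitrary parameter $\psi\in B_\rho(\tilde{\psi})$.

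The first step is to apply the local-optimality hypothesis at this $(\state,\ctrl)$ and $\psi$. This is allowed because the hypothesis is assumed for all $\state$, all $\ctrl$, and all $\psi$ in the ball. It gives
\[
\qfunc\bigl(\state,\ctrl,\dpolicy_{\psi}(\state,\ctrl)\bigr)\;\ge\;\qfunc\bigl(\state,\ctrl,\dpolicy_{\tilde{\psi}}(\state,\ctrl)\bigr).
\]

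The second step is to concatenate this with the assumed surrogate inequality $\qfunc(\state,\ctrl,\dpolicy_{\tilde{\psi}}(\state,\ctrl))\ge\beta(\valfunc(\state))$. Transitivity of $\ge$ then yields
\[
\qfunc\bigl(\state,\ctrl,\dpolicy_{\psi}(\state,\ctrl)\bigr)\ge\beta(\valfunc(\state)).
\]
Since $\psi\in B_\rho(\tilde{\psi})$ was arbitrary, the claim holds for every disturbance policy in the ball.

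There is no real obstacle here; the proposition is a direct consequence of how local optimality is defined in the statement. Two features of that definition are worth noting in the write-up.

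\begin{itemize}
\item The definition is \emph{uniform} in $(\state,\ctrl)$, so no continuity of $\qfunc$ or of the policy parameterization is needed, and the right-hand side $\beta(\valfunc(\state))$ does not depend on $\psi$ at all.
\item The guarantee is relative to the parametric family $\{\dpolicy_\psi:\psi\in B_\rho(\tilde{\psi})\}$, not to all of $\dset$. Full robustness in the sense of Theorem~\ref{thm:robust_QCBF} would require $\dpolicy_{\tilde{\psi}}$ to attain $\min_{\dstb\in\dset}\qfunc(\state,\ctrl,\dstb)$, which the proposition does not claim.
\end{itemize}
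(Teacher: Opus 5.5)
Your proof is correct and matches the paper, which only remarks that the result ``follows directly from local optimality of $\dpolicy_{\tilde{\psi}}(\state,\ctrl)$''. That is exactly your step: apply the optimality inequality at the fixed $(\state,\ctrl)$ and $\psi\in B_\rho(\tilde{\psi})$, then chain it with the surrogate constraint by transitivity. Your caveat is also accurate: the guarantee holds only relative to the parametric family $\{\dpolicy_\psi:\psi\in B_\rho(\tilde{\psi})\}$, not to all of $\dset$, which is more precise than the paper's informal wording about ``all disturbance realizations that are sufficiently close''.
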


\begin{remark}[Safety Verification]
While the use of function approximators in neural $\qfunc$-CBF safety filtering precludes direct reliance on the theoretical properties of the exact solution to~\eqref{eq:HJI}, we can inherit some of them in the form of explicit certificates for the approximate control policy and safety value function, obtained through recently developed post-hoc safety verification methods such as offline statistical calibration through conformal prediction~\cite{lin2024verification} and runtime verification using \textit{imagined} gameplay rollouts with faster-than-real-time, black-box simulators~\cite{hsu2023isaacs,nguyen2025gameplay}.
\end{remark}

\section{Experiments}
We empirically validate the proposed robust $\qfunc$-\gls{CBF} framework on a disturbed inverted pendulum and a simulated 36-dimensional quadruped with black-box dynamics. On the pendulum, the learned robust $\qfunc$-\gls{CBF} nearly recovers the maximal robust safe set. On the quadruped, it reliably enforces safety under adversarial uncertainty realizations while preserving forward locomotion.

\subsection{Disturbed Inverted Pendulum}
\label{subsec:exp_pendulum}

We consider the disturbed inverted pendulum benchmark of Alan et al.~\cite{alan2023parameterized}. Let $\theta\in[-\pi,\pi]$ denote the angle from the upright equilibrium and let $\omega=\dot{\theta}$ denote angular velocity. The dynamics are given by
\begin{equation}
\label{eq:exp_pendulum_dynamics}
\dot{\theta}=\omega,\qquad
\dot{\omega}=10\sin\theta+\frac{u}{2}+\frac{F}{2}\cos\theta,
\end{equation}
where $u\in[-20,20]$ and $F\in[-2,2]$. Unlike the original setup in~\cite{alan2023parameterized}, we impose the control bound explicitly. Failure occurs when $|\theta|>\pi/3$.

We compare the $0$-superlevel set of the learned robust $\qfunc$-\gls{CBF} against the following baselines:

\p{Heuristic barrier candidate}
Following Alan et al.~\cite{alan2023parameterized}, we use
$h_{\mathrm{heu}}(\theta,\omega)=1-16\theta^2-8\theta\omega-4\omega^2.$
Because the original setup does not impose the control bound, $h_{\mathrm{heu}}$ is not valid under our setting and is included only as a heuristic baseline.

\p{Analytically designed CBF}
We also include a robust \gls{CBF} that is valid in the sense of \autoref{def:RCTCBF}, derived using the nominal evading maneuver~\cite{oh2023safety}
$u^\star(\omega):=-20\,\sgn(\omega)$:
$h_{\mathrm{ana}}(\theta,\omega)=\cos\theta-\frac{1}{2}-\frac{\sqrt{3}}{2(19-10\sqrt{3})}\omega^2.$

\p{Maximal robust safe set}
We solve the Isaacs equation~\eqref{eq:HJI} on a grid using \texttt{OptimizedDP}~\cite{bui2022optimizeddp}.

\autoref{fig:invpen_exp_result} shows that the $0$-superlevel set of the neural $\qfunc$-\gls{CBF} is \emph{substantially less conservative} than the barrier-based baselines and \emph{nearly recovers the maximal robust safe set}. For all filters, we use the same PD task controller as in Alan et al.~\cite{alan2023parameterized}. To stress-test robustness, for each filter, we freeze the resulting executed control at each state on the grid and numerically compute the corresponding best-response disturbance against that closed-loop controller. We then initialize 20 trajectories near the boundary of its $0$-superlevel set and simulate them under these filter-specific worst-case disturbances. All three filtering schemes achieve 100\% empirical safety in this test.

\begin{figure}
    \centering
    \includegraphics[width=\columnwidth, trim=8cm 2cm 2cm 1cm, clip]{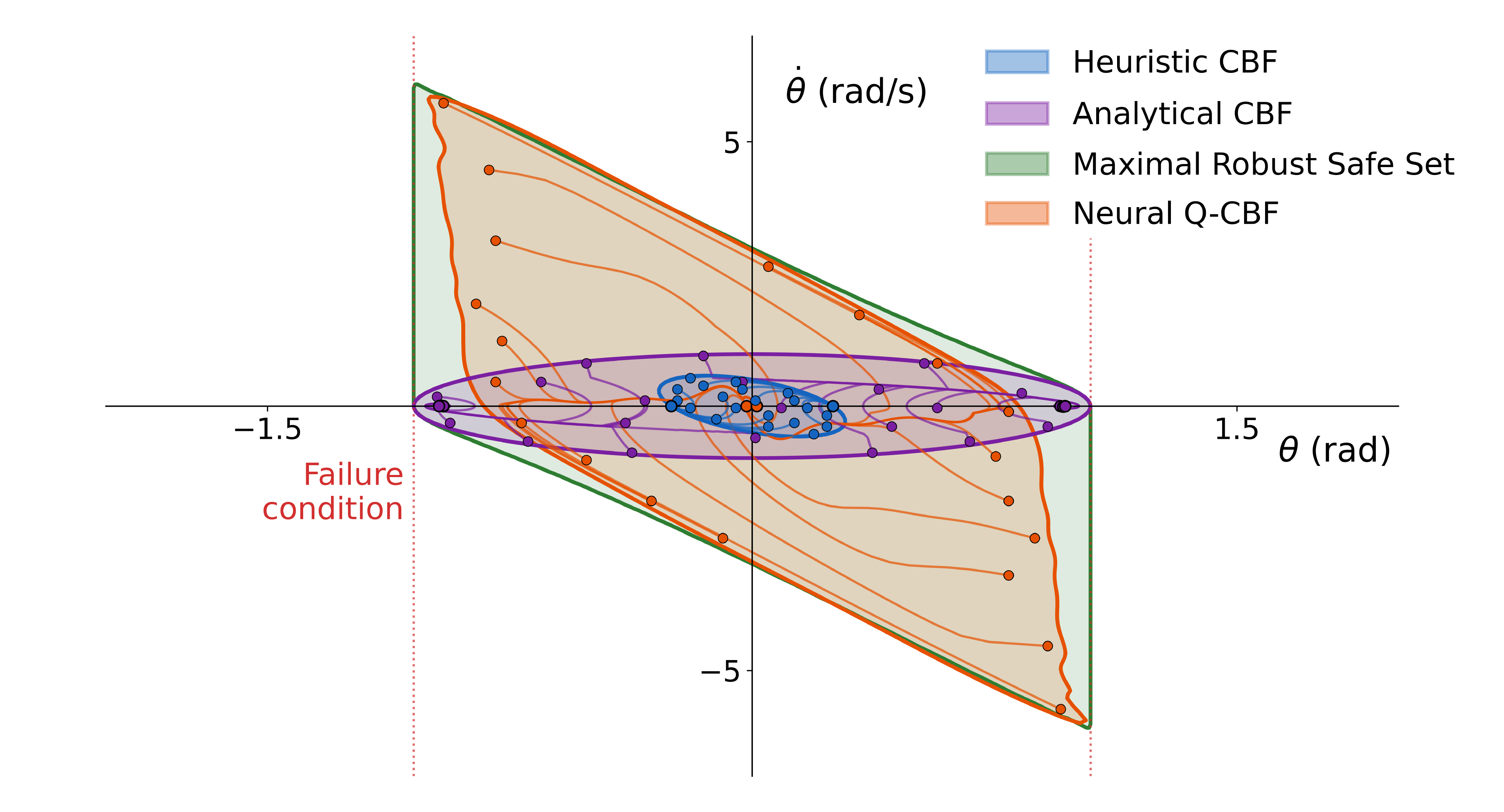}
    \caption{Safe sets and rollout trajectories for the disturbed inverted pendulum. The $0$-superlevel set of the neural $\qfunc$-CBF is substantially less conservative than the barrier-based baselines and nearly recovers the maximal robust safe set. For each method, 20 trajectories are initialized near the boundary of its $0$-superlevel set and simulated under the corresponding numerical best-response disturbance.}
    \label{fig:invpen_exp_result}
    \vspace{-5mm}
\end{figure}

\subsection{Quadrupedal Locomotion}
\label{subsec:exp_quadruped}

To validate that robust $\qfunc$-\gls{CBF} can reliably enforce safety on a high-dimensional black-box system under adversarial uncertainty realizations, we consider a high-fidelity simulation of a quadrupedal locomotion task with the 36-dimensional Unitree Go2 robot in MuJoCo~\cite{nguyen2025gameplay}. We treat the simulator as a black-box transition mechanism taking as inputs a 12-D control input, given by the joint-position-increment vector 
$\ctrl = [\delta\theta_{\mathrm J}^1,\ldots,\delta\theta_{\mathrm J}^{12}]^\top \in [-0.5,0.5]^{12}$,
and a bounded disturbance input consisting of an external force of magnitude at most $50\,\mathrm{N}$, whose application point on the torso and direction may be chosen arbitrarily. Failure is declared whenever any monitored torso-corner ground clearance falls below $0.10\,\mathrm{m}$.

To compare task performance with \gls{LRSF}~\eqref{eq:LRSF}, we first train a best-effort fallback policy together with its best-response disturbance through adversarial \gls{RL}. 
The resulting best-effort fallback policy is used for the \gls{LRSF} baseline. 
For the neural $\qfunc$-\gls{CBF}, we further train the best-response disturbance policy following \autoref{subsec:robust_neural_synthesis}. 
All methods are evaluated under this same disturbance policy that was trained adversarially against a spectrum of control policies.

Fig.~\ref{fig:go2_exp_result} compares the unfiltered task policy, \gls{LRSF}, and neural $\qfunc$-\gls{CBF}. The task policy is a pure-pursuit controller that drives the robot from the starting point (purple star) toward 
the right of the map. Under adversarial disturbance, the unfiltered 
task policy records only a 16\% safe rate. 
\gls{LRSF}'s loss in robustness due to neural approximation error is amplified by the last-minute intervention, yielding a low safe rate of 38\%.
Moreover, \gls{LRSF} produces abrupt chattering behaviors due to frequent switches between the task and fallback policies, a widely reported phenomenon~\cite{oh2025safety,hu2024active,bejarano2023multi}, preventing the robot from making meaningful forward progress.
By contrast, the neural $\qfunc$-\gls{CBF} preserves stable forward locomotion while successfully enforcing safety across all 50 randomized trials. The histogram of per-step task input deviation $\|\ctrl^\task-\ctrl^\text{CBF}\|^2$ is concentrated at substantially smaller values for neural $\qfunc$-\gls{CBF} than for \gls{LRSF}, suggesting that the neural $\qfunc$-\gls{CBF} better preserves task performance by enforcing smaller modifications to the task input than the baseline \gls{LRSF}.

\section{Conclusion}
\label{sec:conclusion}
We introduced a new robust \gls{CBF} framework for general nonlinear systems with bounded uncertainty. We showed that the safety value function characterized by the dynamic programming Isaacs equation is a valid robust \gls{DCBF} protecting the \emph{maximal robust safe set}. Incorporating insights from \gls{RL}, we lifted this value function into state--control--disturbance space to derive a novel robust $\qfunc$-\gls{CBF} constraint for runtime safety filtering. Coupled with reachability-based adversarial \gls{RL}, the framework enables robust $\qfunc$-\gls{CBF} synthesis and deployment \emph{requiring only black-box access} to a transition mechanism, without closed-form dynamics, control-affine assumptions, or known uncertainty structure. We validated the framework on a disturbed inverted pendulum and a simulated 36-D quadruped. 
Our results provide a practical recipe for scalable synthesis and deployment of certifiable robust \gls{CBF} safety filters on high-dimensional nonlinear systems, using neural approximators that can be further strengthened through post-hoc verification.

\bibliographystyle{IEEEtran}
\bibliography{references}

\end{document}